\newcommand{\dt}{.}
\DeclareMathOperator{\excl}{\mathrm{excl}}
\DeclareMathOperator{\dupl}{\mathrm{dupl}}
\newcommand{\cra}{completely reachable automata}
\newcommand{\cran}{completely reachable automaton}
\newcommand{\scc}{SCC}
\newcommand{\scn}{strongly connected}
\DeclareSymbolFont{rsfscript}{OMS}{rsfs}{m}{n}
\DeclareSymbolFontAlphabet{\mathrsfs}{rsfscript}
\begin{document}
\title{A Characterization\\ of Completely Reachable Automata\thanks{Supported by the Russian Foundation for Basic Research, grant no.\ 16-01-00795, the Russian Ministry of Education and Science, project no.\ 1.3253.2017, and the Competitiveness  Enhancement Program of Ural Federal University.}}

\titlerunning{Completely Reachable Automata}

\author{E. A. Bondar \and M. V. Volkov}

\authorrunning{E. A. Bondar, M. V. Volkov}

\tocauthor{E. A. Bondar, M. V. Volkov (Ekaterinburg, Russia)}

\institute{Institute of Natural Sciences and Mathematics\\ Ural Federal University, Lenina 51, 620000 Ekaterinburg, Russia\\
\email{bondareug@gmail.com, mikhail.volkov@usu.ru}}

\maketitle

\begin{abstract}
A complete deterministic finite automaton in which every non-empty subset of the state set occurs as the image of the whole state set under the action of a suitable input word is called completely reachable. We characterize completely reachable automata in terms of certain directed graphs.

\keywords{Deterministic finite automaton, Complete reachability}
\end{abstract}

\section{Overview}
A \emph{complete deterministic finite automaton} (DFA) is a pair $\langle Q,\Sigma\rangle$, where $Q$ and $\Sigma$ are finite sets called the \emph{state set} and the \emph{input alphabet} respectively, together with a totally defined map $Q\times\Sigma\to Q$, $(q,a)\mapsto q\dt a$, called the \emph{transition function}. Let $\Sigma^*$ stand for the collection of all finite words over the alphabet $\Sigma$, including the empty word. The transition function extends to a function $Q\times\Sigma^*\to Q$ in the following natural way: for every $q\in Q$ and $w\in\Sigma^*$, we set $q\dt w:=q$ if $w$ is empty and $q\dt w:=(q\dt v)\dt a$ if $w=va$ for some $v\in\Sigma^*$ and some $a\in\Sigma$. Thus, every word $w\in\Sigma^*$ induces a transformation of the set $Q$. If $P$ is a non-empty subset of $Q$, we write $P\dt w$ for $\{q\dt w\mid q\in P\}$.

Given a DFA $\mathrsfs{A}=\langle Q,\Sigma\rangle$, we say that a non-empty subset $P\subseteq Q$ is \emph{reachable} in $\mathrsfs{A}$ if $P=Q\dt w$ for some word $w\in\Sigma^*$. A DFA is called \emph{completely reachable} if every non-empty subset of its state set is reachable.

Our paper \cite{cra1} lists several motivations for considering completely reachable automata. Here we only mention that they have appeared in the study of descriptional complexity of formal languages~\cite{Maslennikova:2012,cra1} and in relation to the \v{C}ern\'{y} conjecture~\cite{Don:2016,GGGJV:2017}. In the present note we characterize completely reachable automata in terms of certain directed graphs.

The paper is organized as follows. In Section~2 we recall a sufficient condition for complete reachability that has been established in~\cite{cra1}. The graph that appears in this condition serves as a departure point for an iterative construction that we describe in Section~3. Our main result is stated and proved in Section~4, followed by a final discussion in Section~5.

\section{The graph $\Gamma_1(\mathrsfs{A})$}

Since we consider only directed graphs in this paper, we call them just graphs in the sequel. Given a graph $\Gamma$, a vertex $p$ is said to be \emph{reachable} from a vertex $q$ if there exists a directed path starting at $q$ and terminating at $p$. The \emph{reachability} relation on $\Gamma$ consists of all pairs $(p,q)$ of vertices such that either $p=q$ or $p$ is reachable from $q$. Clearly, the reachability relation is a pre-order on the vertex set of $\Gamma$, and the equivalence corresponding to the pre-order partitions this set into classes of mutually reachable vertices. The subgraphs of $\Gamma$ induced on these classes are called the \emph{strongly connected components} (SCCs, for short) of $\Gamma$. A graph with a unique \scc\ is said to be \emph{strongly connected}.

If $\mathrsfs{A}=\langle Q,\Sigma\rangle$ is a DFA, the \emph{defect} of a word $w\in\Sigma^*$ with respect to $\mathrsfs{A}$ is defined as the size of the set $Q{\setminus}Q\dt w$. Consider a word $w$ of defect~1. For such a word, the set $Q{\setminus}Q\dt w$ consists of a unique state, which is called the \emph{excluded state} for $w$ and is denoted by $\excl(w)$. Further, the set $Q\dt w$ contains a unique state $p$ such that $p=q_1\dt w=q_2\dt w$ for some $q_1\ne q_2$; this state $p$ is called the \emph{duplicate state} for $w$ and is denoted by $\dupl(w)$. Let $W_1(\mathrsfs{A})$ stand for the set of all words of defect~1 with respect to $\mathrsfs{A}$, and let $\Gamma_1(\mathrsfs{A})$ denote the graph with the vertex set $Q$ and the edge set
\begin{equation}
\label{eq:e1}
E_1:=\{(\excl(w),\dupl(w))\mid w\in W_1(\mathrsfs{A})\}.
\end{equation}
We say that the edge $(\excl(w),\dupl(w))\in E_1$ is \emph{enforced} by the word $w$.

\begin{theorem}[\mdseries\cite{cra1}]
\label{thm:sufficient}
If a DFA $\mathrsfs{A}=\langle Q,\Sigma\rangle$ is such that the graph $\Gamma_1(\mathrsfs{A})$ is \scn, then $\mathrsfs{A}$ is completely reachable; more precisely, for every non-empty subset $P\subseteq Q$, there is a product $w$ of words of defect~$1$ such that $P=Q\dt w$.
\end{theorem}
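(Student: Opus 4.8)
The plan is to argue by induction on the size of the subset $P$. For $P=Q$ we take $w$ to be the empty word, so assume $P\subsetneq Q$ and that every subset of $Q$ strictly larger than $P$ is reachable by a product of words of defect~$1$. It suffices to produce one word $v\in W_1(\mathrsfs{A})$ and one reachable superset $R\supseteq P$ of size $|P|+1$ such that $R\dt v=P$; composing the word witnessing the reachability of $R$ with $v$ then finishes the step. So the real content is: \emph{given any proper non-empty subset $P\subsetneq Q$, find a word $v$ of defect~$1$ with $\excl(v)\notin P$, $\dupl(v)\in P$, and such that $v$ maps some $(|P|+1)$-element superset of $P$ exactly onto $P$}. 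The last clause is automatic once the first two hold if we choose the superset correctly: if $\excl(v)=e\notin P$ and $\dupl(v)=d\in P$, then $v$ restricted to $P\cup\{e\}$ is injective except that $e$ and one preimage of $d$ collapse, so $(P\cup\{e\})\dt v$ is a set of the form $(P\setminus\{d\})\dt v\cup\{d\}$... — this is not quite $P$, so the argument must instead track things more carefully, which is where $\Gamma_1(\mathrsfs{A})$ enters.

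The correct way to use strong connectivity of $\Gamma_1(\mathrsfs{A})$ is as follows. First I would establish the one-step lemma: if $P\subsetneq Q$ and there is an edge $(e,d)\in E_1$ with $e\notin P$ and $d\in P$, enforced by a word $w$ of defect~$1$, then the set $P'$ obtained from $P$ by the following recipe is reachable from a larger set in one application of a defect-$1$ word. Concretely, take the set $R:=(P\setminus\{d\})\cup w^{-1}(d)$, where $w^{-1}(d)=\{q_1,q_2,\dots\}$ are the $\geq 2$ states mapped by $w$ to $d$; then $R$ has size $|P|+|w^{-1}(d)|-1\geq|P|+1$, and $R\dt w$ restricted appropriately recovers $P$ together with possibly some extra image points — so one must pass to a cleverly chosen subset. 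The clean statement, and the one I would aim for, is the contrapositive/reformulation used in \cite{cra1}: \emph{if $P\subsetneq Q$ then there is a set $P^{\#}$ with $|P^{\#}|=|P|+1$ and a word $w\in W_1(\mathrsfs{A})$ with $P^{\#}\dt w=P$, provided $\Gamma_1(\mathrsfs{A})$ contains an edge from $Q\setminus P$ into $P$}. The point is that strong connectivity guarantees such an edge exists for \emph{every} proper non-empty $P$: the edge set $E_1$, being strongly connected on $Q$, cannot have all its edges staying inside $P$ or inside $Q\setminus P$, so some edge crosses from the complement into $P$ (the orientation is the one we need because strong connectivity gives edges in both directions across any cut).

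The main obstacle is the bookkeeping in the one-step lemma — verifying that the preimage trick actually yields a superset of $P$ of the \emph{right} size mapping \emph{onto} $P$, rather than onto some other set of the same size. The subtlety is that $w$ may move states of $P$ around, so $P^{\#}$ is not simply $P\cup\{e\}$; one has to take $P^{\#}$ to be a set whose image under $w$ is $P$, built by starting from $P\setminus\{d\}$, throwing in the full $w$-preimage of $d$ intersected with a suitable region, and checking that nothing outside lands in $P$. I expect this to require a short case analysis on whether $e$ itself lies in the preimage of $d$ and on how $w$ permutes $P\setminus\{d\}$; once that is settled, the induction runs immediately, and the ``more precisely'' clause about $w$ being a product of defect-$1$ words falls out because each inductive step multiplies by exactly one such word.
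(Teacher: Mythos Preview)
Your overall approach is exactly the paper's: induct on $|Q\setminus P|$, use strong connectivity to find an edge $(q,p)\in E_1$ with $q\notin P$ and $p\in P$, and use a word $w$ of defect~1 enforcing that edge to exhibit a set $R$ with $|R|=|P|+1$ and $R\dt w=P$. However, you are making the one-step lemma much harder than it is. The anticipated ``short case analysis on whether $e$ itself lies in the preimage of $d$ and on how $w$ permutes $P\setminus\{d\}$'' is a red herring, and the attempt to build $R$ by ``starting from $P\setminus\{d\}$'' is the wrong starting point: you should take \emph{preimages} of $P\setminus\{d\}$, not the set itself.

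The clean resolution, which the paper reproduces verbatim in Case~1 of the proof of Theorem~\ref{thm:sufficient2}, is simply this. Since $\excl(w)=q\notin P$, every element of $P$ lies in the image $Q\dt w$. Hence for each $r\in P\setminus\{p\}$ there is a state $r'$ with $r'\dt w=r$; moreover this $r'$ is unique, because $w$ has defect~1 and $p$ is the only duplicated value. For $p$ itself there are exactly two preimages $q_1\ne q_2$. Setting
\[
R:=\{q_1,q_2\}\cup\{r'\mid r\in P\setminus\{p\}\}
\]
(equivalently $R=w^{-1}(P)$) gives $R\dt w=P$ and $|R|=|P|+1$ immediately, with no case split and nothing to check about how $w$ moves elements of $P$ around.
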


In Theorem~\ref{thm:sufficient} and in similar statements below we do not exclude the case when $P=Q$ since we may consider the empty word as the product of the empty set of factors with any prescribed property.

The following example, also taken from~\cite{cra1}, demonstrates that the condition of Theorem~\ref{thm:sufficient} is not necessary.
\begin{example}
\label{examp:not necessary}
Consider the DFA $\mathrsfs{E}_3$ with the state set $\{1,2,3\}$ and the input letters $a_{[1]},a_{[2]},a_{[3]},a_{[1,2]}$ that act as follows:
\begin{eqnarray*}
i\dt a_{[1]}:=\begin{cases}
2 &\text{if } i=1,2,\\
3 &\text{if } i=3;
\end{cases}
&\qquad&
i\dt a_{[2]}:=\begin{cases}
1 &\text{if } i=1,2,\\
3 &\text{if } i=3;
\end{cases}
\\
i\dt a_{[3]}:=\begin{cases}
1 &\text{if } i=1,2,\\
2 &\text{if } i=3;
\end{cases}
&\qquad&
i\dt a_{[1,2]}:=3\ \text{ for all }\ i=1,2,3.
\end{eqnarray*}

\begin{figure}[bt]
\begin{center}
\unitlength=1.05mm
\begin{picture}(80,32)(-7.5,-5)
\node(A1)(0,0){1} \node(B1)(24,0){2} \node(C1)(12,17){3}
\drawedge[ELside=r,curvedepth=-3](A1,B1){$a_{[1]}$}
\drawedge[ELside=r,ELpos=55](B1,A1){$a_{[2]},a_{[3]}$}
\drawedge[ELside=r,ELpos=45](C1,B1){$a_{[3]}$}
\drawedge[ELside=r,curvedepth=-3](B1,C1){$a_{[1,2]}$}
\drawedge[curvedepth=3](A1,C1){$a_{[1,2]}$}
\drawloop[ELdist=.8,loopangle=0](B1){$a_{[1]}$}
\drawloop[ELdist=.8,loopangle=180](A1){$a_{[2]},a_{[3]}$}
\drawloop[ELdist=.8](C1){$a_{[2]},a_{[1]},a_{[1,2]}$}
\node(A2)(55,0){1} \node(B2)(79,0){2} \node(C2)(67,17){3}
\drawedge[ELside=r,curvedepth=-3](A2,B2){$a_{[1]}$}
\drawedge[ELside=r](B2,A2){$a_{[2]}$}
\drawedge[ELside=r](C2,A2){$a_{[3]}$}
\end{picture}
\caption{The automaton $\mathrsfs{E}_3$ and the graph $\Gamma_1(\mathrsfs{E}_3)$. Each edge of $\Gamma_1(\mathrsfs{E}_3)$ is labeled by a word of defect~1 that enforces this edge.}\label{fig:e3}
\end{center}
\end{figure}
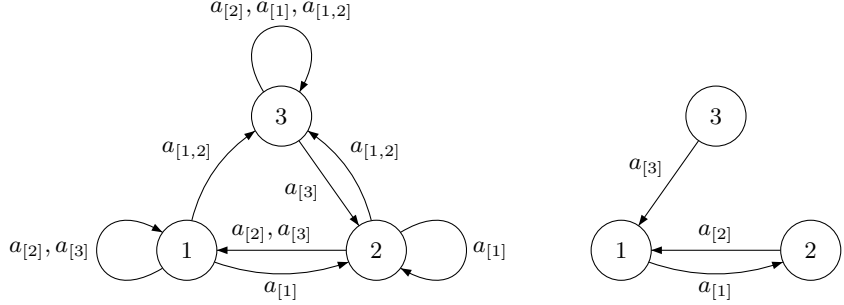
The automaton $\mathrsfs{E}_3$ is shown in Fig.\,\ref{fig:e3} on the left. The graph $\Gamma_1(\mathrsfs{E}_3)$ is shown in Fig.\,\ref{fig:e3} on the right; it is not \scn. However, it can be checked by a straightforward computation that the automaton $\mathrsfs{E}_3$ is completely reachable. (This will also follow from Theorem~\ref{thm:sufficient2} below.)
\end{example}

\section{The graph $\Gamma_k(\mathrsfs{A})$ for $k>1$}

In order to generalize Theorem~\ref{thm:sufficient}, we first extend the operators $\excl(\_)$ and $\dupl(\_)$ to words with defect $>1$. Namely, if $\mathrsfs{A}=\langle Q,\Sigma\rangle$ is a DFA and $w\in\Sigma^*$, we define $\excl(w)$ as the set $Q{\setminus}Q\dt w$ and $\dupl(w)$ as the set $\{p\in Q\mid p=q_1\dt w=q_2\dt w \ \text{ for some }\ q_1\ne q_2\}$. If we take the usual liberty of ignoring the distinction between singleton sets and their elements, then for words of defect~1, the new meanings of $\excl(w)$ and $\dupl(w)$ agree with the definition from~\cite{cra1}.

Now we describe an iterative process that assigns to each given DFA $\mathrsfs{A}=\langle Q,\Sigma\rangle$ a certain ``layered'' graph $\Gamma(\mathrsfs{A})$. The process starts with the graph $\Gamma_1(\mathrsfs{A})$ defined above. If the graph $\Gamma_1(\mathrsfs{A})$ is \scn, then $\Gamma(\mathrsfs{A}):=\Gamma_1(\mathrsfs{A})$ and the process stops with SUCCESS. If all \scc{}s of $\Gamma_1(\mathrsfs{A})$ are singletons, we also set $\Gamma(\mathrsfs{A}):=\Gamma_1(\mathrsfs{A})$ and the process stops with FAILURE. Except for these two extreme cases, we extend the graph $\Gamma_1(\mathrsfs{A})$ as follows. Let $Q_2$ be the collection of the vertex sets of all at least 2-element \scc{}s of the graph $\Gamma_1(\mathrsfs{A})$ and let $W_2(\mathrsfs{A})$ stand for the set of all words of defect~2 with respect to $\mathrsfs{A}$. We define $\Gamma_2(\mathrsfs{A})$ as the graph whose vertex set is $Q\cup Q_2$ and whose edge set is the union of $E_1$ with the set $I_2:=\{(q,C)\in Q\times Q_2\mid q\in C\}$ of \emph{inclusion} edges representing the containments between the elements of $Q$ and the \scc{}s in $Q_2$ and the set
\begin{multline}
\label{eq:e2}
E_2:=\{(C,p)\in Q_2\times Q\mid C\supseteq\excl(w),\, p\in\dupl(w)\\ \text{for some }\ w\in W_2(\mathrsfs{A})\}.
\end{multline}
Extending the terminology used for edges from $E_1$, we say that the edge $(C,p)\in E_2$ with $C\supseteq\excl(w)$, $p\in\dupl(w)$ is \emph{enforced} by the word $w$.

Observe that the definition of $E_1$ in \eqref{eq:e1} can be easily restated in the form similar to the one of the definition of $E_2$ in \eqref{eq:e2}:
\begin{multline*}
E_1:=\{(q,p)\in Q\times Q\mid \{q\}\supseteq\excl(w),\, p\in\dupl(w)\\ \text{for some }\ w\in W_1(\mathrsfs{A})\}.
\end{multline*}

For an illustration, see Fig.\,\ref{fig:g2e3} that displays the graph $\Gamma_2(\mathrsfs{E}_3)$,
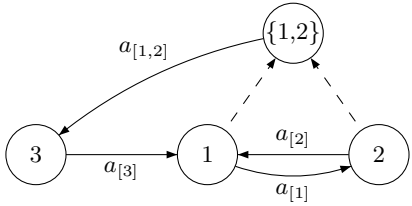
\begin{figure}[t]
\begin{center}
\unitlength=.95mm
\begin{picture}(50,22)(0,-5)
\node(A2)(25,0){1}
\node(B2)(49,0){2}
\node(C2)(1,0){3}
\drawedge[ELside=r,curvedepth=-3](A2,B2){$a_{[1]}$}
\drawedge[ELside=r](B2,A2){$a_{[2]}$}
\drawedge[ELside=r](C2,A2){$a_{[3]}$}
\node(C12)(37,17){\{1,2\}}
\drawedge[dash={1.5}{1.5}](A2,C12){}
\drawedge[dash={1.5}{1.5}](B2,C12){}
\drawedge[ELside=r,curvedepth=-3](C12,C2){$a_{[1,2]}$}
\end{picture}
\caption{The graph $\Gamma_2(\mathrsfs{E}_3)$, in which each non-inclusion edge is labeled by a word of defect~1 or 2 that enforces this edge.}\label{fig:g2e3}
\end{center}
\end{figure}
the inclusion edges being shown with dashed arrows. Observe that the graph is \scn.

Before we proceed with the description of the generic step of our process, we present an intermediate result. Even though it will be superseded by Theorem~\ref{thm:sufficient3}, we believe that its proof may help the reader to better understand the intuition behind our construction and to easier digest more technical arguments that occur later.

\begin{theorem}
\label{thm:sufficient2}
If a DFA $\mathrsfs{A}=\langle Q,\Sigma\rangle$ is such that the graph $\Gamma_2(\mathrsfs{A})$ is \scn, then $\mathrsfs{A}$ is completely reachable; more precisely, for every non-empty subset $P\subseteq Q$, there is a product $w$ of words of defect at most\/ $2$ such that $P=Q\dt w$.
\end{theorem}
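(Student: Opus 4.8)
The plan is to mimic the proof of Theorem~\ref{thm:sufficient}, arguing by reverse induction on $|P|$: assuming every subset of size $>m$ is reachable via a product of words of defect at most~$2$, I want to reach an arbitrary $P$ with $|P|=m<|Q|$. Since $|P|<|Q|$, there is at least one state $r\notin P$, and because $\Gamma_2(\mathrsfs{A})$ is strongly connected, there is a directed path in $\Gamma_2(\mathrsfs{A})$ from some vertex to $r$; the key observation is that, running this path backwards, one eventually leaves the part of $\Gamma_2(\mathrsfs{A})$ lying ``below'' $P$ — more precisely, there must be an edge of the path whose head is in $Q\setminus P$ but whose tail (when it is a state) or whose associated states (when it is a set in $Q_2$) are not entirely contained in $Q\setminus P$. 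I would isolate the first such ``boundary'' edge $e$ along a suitable traversal and analyze it by cases according to whether $e\in E_1$, $e\in E_2$, or $e\in I_2$.

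The heart of the argument is the case analysis on the type of the boundary edge. If $e=(\excl(w),\dupl(w))\in E_1$ is enforced by a word $w\in W_1(\mathrsfs{A})$ with $\excl(w)\notin P$ and $\dupl(w)\in P$, then $P\dt w$ omits the state $\dupl(w)\dt w$ image-collision\ldots — more carefully, one checks that $|P\dt w|\le|P|$ and, by choosing the path so that $\dupl(w)\in P$ while $\excl(w)\notin P$, that $P\dt w$ acquires the new ``hole'' at $\excl(w)\dt w$ while $\dupl(w)$'s preimages collapse, making $|Q\dt w|=|Q|-1$ and letting us lift $P$ to a strictly larger set; by the induction hypothesis that larger set is reachable by a product of words of defect $\le 2$, and appending $w$ (defect~$1$) gives $P$. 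The genuinely new case is $e=(C,p)\in E_2$, enforced by some $w\in W_2(\mathrsfs{A})$ with $C\supseteq\excl(w)$ and $p\in\dupl(w)$: here $C$ is the vertex set of a $\ge 2$-element SCC of $\Gamma_1(\mathrsfs{A})$, and the inductive traversal will have arranged that $p\in P$ while $C\not\subseteq Q\setminus P$, i.e.\ $C\cap P\neq\emptyset$. One then has to produce a set $P'$ with $|P'|>|P|$ and a product $u$ of defect-$\le 2$ words with $P'\dt u=P$. The idea is to first use strong connectedness \emph{inside} $\Gamma_1(\mathrsfs{A})$ restricted to $C$ — which is exactly the hypothesis of Theorem~\ref{thm:sufficient} applied to the ``local'' situation — to move, by a product of defect-$1$ words, any desired subset of $C$ onto any other subset of $C$ of the same size; combined with the defect-$2$ word $w$ (which has $\excl(w)\subseteq C$), this lets us fabricate a word that, starting from $Q$, produces a set covering $P$ with exactly one extra element, namely a suitable element of $C\setminus P$.

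The main technical obstacle, and the step I would spend the most care on, is exactly this interaction between the ``global'' graph $\Gamma_2(\mathrsfs{A})$ and the ``local'' strong connectedness of the SCCs $C\in Q_2$ inside $\Gamma_1(\mathrsfs{A})$: one must show that the defect-$2$ word $w$ enforcing the edge $(C,p)$ can be combined with a product of defect-$1$ words supported on $C$ so that the composite word, applied to $Q$, yields a set whose complement is a single, controllable state and which contains a pre-image pattern from which $P$ is then recovered by the induction hypothesis. Making the bookkeeping precise — tracking which states are excluded, which are duplicated, and ensuring the sizes strictly increase at each lift so the reverse induction terminates — is where the real work lies; the case $e\in I_2$ is comparatively easy, since an inclusion edge $(q,C)$ simply tells us $q\in C$ and is used only to route the traversal from a state into its SCC. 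Once the boundary edge in each of the three cases yields a strictly larger reachable set, the reverse induction closes and every non-empty $P\subseteq Q$ is reachable by a product of words of defect at most~$2$.
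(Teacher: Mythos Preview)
Your inductive framework matches the paper's, but the $E_2$ case has a real gap. You assert that the boundary edge $(C,p)\in E_2$ satisfies $p\in P$ and $C\cap P\ne\varnothing$, and then propose to repair the situation by using the strong connectedness of $C$ inside $\Gamma_1(\mathrsfs{A})$ to ``move any desired subset of $C$ onto any other subset of $C$ of the same size'' and combine this with the defect-$2$ word $w$. That claim is unsupported: defect-$1$ words act on all of $Q$, not on $C$; nothing guarantees they permute $C$, let alone realize arbitrary bijections between its equal-size subsets. More to the point, if $\excl(w)$ meets $P$ then some state of $P$ has no $w$-preimage, so no single word of defect $\le 2$ gives the lift $P'\dt u=P$ with $|P'|>|P|$ that your induction needs, and your sketch of a composite word ``producing a set covering $P$ with exactly one extra element'' is far from a proof. (Your first paragraph's description of the boundary edge is also inconsistent with the condition $C\cap P\ne\varnothing$ you then impose.)

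What is missing is the structural step that makes the $E_2$ case trivial rather than hard. One first separates two cases: either some edge $(q,p)\in E_1$ has $q\notin P$ and $p\in P$ (and then the defect-$1$ lift from Theorem~\ref{thm:sufficient} works), or no such edge exists. In the second case every \scc\ of $\Gamma_1(\mathrsfs{A})$ is wholly contained in $P$ or wholly disjoint from $P$. Setting $\widehat P:=P\cup\{C\in Q_2\mid C\subseteq P\}$, one checks that no inclusion edge can cross from $(Q\cup Q_2)\setminus\widehat P$ into $\widehat P$, so strong connectivity of $\Gamma_2(\mathrsfs{A})$ forces some $(C,p)\in E_2$ with $p\in P$ and $C\notin\widehat P$; by the dichotomy just noted this means $C\cap P=\varnothing$, hence $\excl(w)\cap P=\varnothing$, and the lift via the single defect-$2$ word $w$ is immediate. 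The strong connectedness of $C$ in $\Gamma_1(\mathrsfs{A})$ \emph{is} relevant, but its role is precisely to force $C\cap P=\varnothing$ once Case~1 has been excluded---not to shuffle subsets of $C$ when $C$ straddles $P$.
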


\begin{proof}
Take any non-empty subset $P\subseteq Q$. We prove that $P$ is reachable in $\mathrsfs{A}$ via a product of words of defect $\le2$ by induction on $m:=|Q{\setminus} P|$. If $m=0$, then $P=Q$ and nothing is to prove as $Q$ is reachable via the empty word. Now let $m>0$ so that $P$ is a proper subset of $Q$. We aim to find a subset $R\subseteq Q$ such that $P=R\dt w$ for some word $w$ of defect at most~2 and $|R|>|P|$. Then $|Q{\setminus} R|<m$, and the induction assumption applies to the subset $R$ whence $R=Q\dt v$ for some product $v$ of words of defect at most 2. Then $P=Q\dt vw$ so that $P$ is reachable as required.

Consider two cases.

\smallskip

\noindent\textit{\textbf{Case 1:} There exists an edge $(q,p)\in E_1$ such that $q\in Q{\setminus} P$ and $p\in P$}.

In this case, the argument from the proof of Theorem~\ref{thm:sufficient} readily applies; we reproduce it here for the reader's convenience. Since $(q,p)\in E_1$, there is a word $w$ of defect~1 with respect to $\mathrsfs{A}$ for which $q$ is the excluded state and $p$ is the duplicate state. By the definition of the duplicate state, $p=q_1\dt w=q_2\dt w$ for some $q_1\ne q_2$, and since the excluded state $q$ for $w$ does not belong to $P$, for each state $r\in P{\setminus}\{p\}$, there exists a state $r'\in Q$ such that $r'\dt w=r$. Now letting $R:=\{q_1,q_2\}\cup\bigl\{r'\mid r\in P{\setminus}\{p\}\bigr\}$, we conclude that $P=R\dt w$ and $|R|=|P|+1$.

\smallskip

\noindent\textit{\textbf{Case 2:} For every edge $(q,p)\in E_1$, if $q\in Q{\setminus} P$, then also $p\in Q{\setminus} P$}.

Here, it is easy to see that every \scc\ of the graph $\Gamma_1(\mathrsfs{A})$ is either contained in $P$ or disjoint with $P$.
Let $\widehat{P}:=P\cup\{C\in Q_2\mid C\subseteq P\}$. This is a proper subset of $Q\cup Q_2$ as $P$ is a proper subset of $Q$.

Since the graph $\Gamma_2(\mathrsfs{A})$ is \scn, there must exist an edge $e$ that connects $(Q\cup Q_2){\setminus}\widehat{P}$ with $\widehat{P}$ in the sense that the head of the edge $e$ belongs to $(Q\cup Q_2){\setminus}\widehat{P}$ while the tail of $e$ lies in $\widehat{P}$. Under the condition of Case~2, the edge $e$ cannot belong to $E_1$. Furthermore, the definition of $\widehat{P}$ eliminates the possibility for $e$ to be an inclusion edge: if $(q,C)\in I_2$ is such that $C\in\widehat{P}$, then $C\subseteq P$ whence $q\in P$. Thus, we conclude that $e\in E_2$, i.e., $e=(C,p)$ where $C\notin\widehat{P}$ and $p\in P$. By the definition of $E_2$, there exists a word $w$ of defect~2 with respect to $\mathrsfs{A}$ such that $C\supseteq\excl(w)$ and $p\in\dupl(w)$. By the definition of $\dupl(w)$, there exist some $q_1,q_2\in Q$ such that $q_1\ne q_2$ and $q_1\dt w=q_2\dt w=p$. Since $C\notin\widehat{P}$, we have $C\cap P=\varnothing$, whence $\excl(w)\cap P=\varnothing$. Thus, for every state $r\in P{\setminus}\{p\}$, there is a state $r'\in Q$ such that $r'\dt w=r$. Now we can proceed as in Case~1: we set $R:=\{q\mid q\dt w=p\}\cup\bigl\{r'\mid r\in P{\setminus}\{p\}\bigr\}$ and conclude that $P=R\dt w$ and $|R|>|P|$ since $q_1,q_2\in R$.\qed
\end{proof}

Now we return to our iterative definition of $\Gamma(\mathrsfs{A})$. Suppose that $k>2$ and the graph $\Gamma_{k-1}(\mathrsfs{A})$  with the vertex set $Q\cup Q_2\cup\cdots\cup Q_{k-1}$ and the edge set
\begin{equation}
\label{eq:e3}
E_1\cup E_2\cup\cdots\cup E_{k-1}\cup I_2\cup\cdots\cup I_{k-1}
\end{equation}
has already been defined. If the graph $\Gamma_{k-1}(\mathrsfs{A})$ is \scn, then we define $\Gamma(\mathrsfs{A})$ as $\Gamma_{k-1}(\mathrsfs{A})$ and the process stops with SUCCESS. Now suppose that $\Gamma_{k-1}(\mathrsfs{A})$ is not \scn. Given an \scc\ $\Delta$ of $\Gamma_{k-1}(\mathrsfs{A})$, we define its \emph{support} as set of all vertices from $Q$ that belong to $\Delta$ and refer to the cardinality of the support as the \emph{rank} of $\Delta$. If all \scc{}s of $\Gamma_{k-1}(\mathrsfs{A})$ have rank less than $k$, we also set $\Gamma(\mathrsfs{A}):=\Gamma_{k-1}(\mathrsfs{A})$ and the process stops with FAILURE. Otherwise we define the set $Q_k$ as the collection of the supports of all \scc{}s of rank at least $k$ in the graph $\Gamma_{k-1}(\mathrsfs{A})$. Let $W_k(\mathrsfs{A})$ stand for the set of all words of defect~$k$ with respect to $\mathrsfs{A}$. We define $\Gamma_k(\mathrsfs{A})$ as the graph whose vertex set is $Q\cup Q_2\cup\cdots\cup Q_{k-1}\cup Q_k$ and whose edge set is the union of the set \eqref{eq:e3} with the two following sets:
\[
I_k:=\{(q,C)\in Q\times Q_k\mid q\in C\}\cup\bigcup_{i=2}^{k-1}\{(D,C)\in Q_i\times Q_k\mid D\subset C\},
\]
which consists of edges representing the inclusions between the elements of $Q\cup Q_2\cup\cdots\cup Q_{k-1}$ and the \scc{}s in $Q_k$,
\begin{multline}
\label{eq:e4}
E_k:=\{(C,p)\in Q_k\times Q\mid C\supseteq\excl(w),\, p\in\dupl(w)\\ \text{for some }\ w\in W_k(\mathrsfs{A})\},
\end{multline}
which comprises edges enforced by the words of defect $k$ (as in the case $k=2$, we say that the edge $(C,p)\in E_k$ with $C\supseteq\excl(w)$ and $p\in\dupl(w)$ is \emph{enforced} by the word $w$).

The following example illustrates the construction.
\begin{example}
\label{examp:illustration}
Consider the DFA $\mathrsfs{E}_5$ with the state set $\{1,2,3,4,5\}$  and the following transition table:

{\small
\begin{center}
\begin{tabular}{c@{\ }|@{\ }c@{\ }c@{\ }c@{\ }c@{\ }c@{\ }c@{\ }c@{\ }c}
              & $a_{[1]}$ & $a_{[2]}$ & $a_{[3]}$  & $a_{[4]}$ & $a_{[5]}$ & $a_{[1,2]}$ & $a_{[4,5]}$ & $a_{[1,3]}$ \mathstrut\\
\hline
1 & 2 & 1 & 1 & 1 & 1 & 3 & 1 & 4\mathstrut\\
2 & 2 & 1 & 1 & 2 & 2 & 3 & 1 & 4\mathstrut\\
3 & 3 & 3 & 2 & 3 & 3 & 3 & 2 & 4\mathstrut\\
4 & 4 & 4 & 4 & 5 & 4 & 4 & 3 & 5\mathstrut\\
5 & 5 & 4 & 5 & 5 & 4 & 5 & 3 & 5\mathstrut
\end{tabular}\ .
\end{center}

}

\begin{figure}[h!]
\begin{center}
\unitlength=0.95mm
\begin{picture}(105,115)(0,-20)
\put(90,113){$\Gamma_1(\mathrsfs{E}_5)$}
\node(A1)(1,100){2}
\node(B1)(25,100){1}
\node(C1)(49,100){3}
\node(D1)(76,100){4}
\node(E1)(100,100){5}
\drawedge[curvedepth=-3](A1,B1){}
\drawedge(B1,A1){}
\drawedge(C1,B1){}
\drawedge[curvedepth=3](D1,E1){}
\drawedge(E1,D1){}
\put(90,73){$\Gamma_2(\mathrsfs{E}_5)$}
\node(A2)(1,65){2}
\node(B2)(25,65){1}
\node(C2)(49,65){3}
\node(D2)(76,65){4}
\node(E2)(100,65){5}
\drawedge[curvedepth=-3](A2,B2){}
\drawedge(B2,A2){}
\drawedge(C2,B2){}
\drawedge[curvedepth=3](D2,E2){}
\drawedge(E2,D2){}
\node(F2)(13,82){\{1,2\}}
\drawedge[dash={1.5}{1.5}](A2,F2){}
\drawedge[dash={1.5}{1.5}](B2,F2){}
\drawedge[curvedepth=3](F2,C2){}
\node(G2)(88,48){\{4,5\}}
\drawedge[dash={1.5}{1.5}](D2,G2){}
\drawedge[dash={1.5}{1.5}](E2,G2){}
\drawedge[curvedepth=3](G2,C2){}
\drawedge[curvedepth=4](G2,B2){}
\put(90,15){$\Gamma_3(\mathrsfs{E}_5)$}
\node(A3)(1,0){2}
\node(B3)(25,0){1}
\node(C3)(49,0){3}
\node(D3)(76,0){4}
\node(E3)(100,0){5}
\drawedge[curvedepth=-3](A3,B3){}
\drawedge(B3,A3){}
\drawedge(C3,B3){}
\drawedge[curvedepth=3](D3,E3){}
\drawedge(E3,D3){}
\node(F3)(13,17){\{1,2\}}
\drawedge[dash={1.5}{1.5}](A3,F3){}
\drawedge[dash={1.5}{1.5}](B3,F3){}
\drawedge[curvedepth=3](F3,C3){}
\node(G3)(88,-17){\{4,5\}}
\drawedge[dash={1.5}{1.5}](D3,G3){}
\drawedge[dash={1.5}{1.5}](E3,G3){}
\drawedge[curvedepth=3](G3,C3){}
\drawedge[curvedepth=4](G3,B3){}
\node[Nadjust=w,Nadjustdist=1.5](H3)(36,34){\{1,2,3\}}
\drawedge[dash={1.5}{1.5},curvedepth=12](A3,H3){}
\drawedge[dash={1.5}{1.5}](B3,H3){}
\drawedge[dash={1.5}{1.5}](C3,H3){}
\drawedge[dash={1.5}{1.5}](F3,H3){}
\drawedge[curvedepth=1](H3,D3){}
\drawedge[curvedepth=4](H3,E3){}
\end{picture}
\caption{The graphs $\Gamma_k(\mathrsfs{E}_5)$ for $k=1,2,3$.}\label{fig:g123e5}
\end{center}
\end{figure}
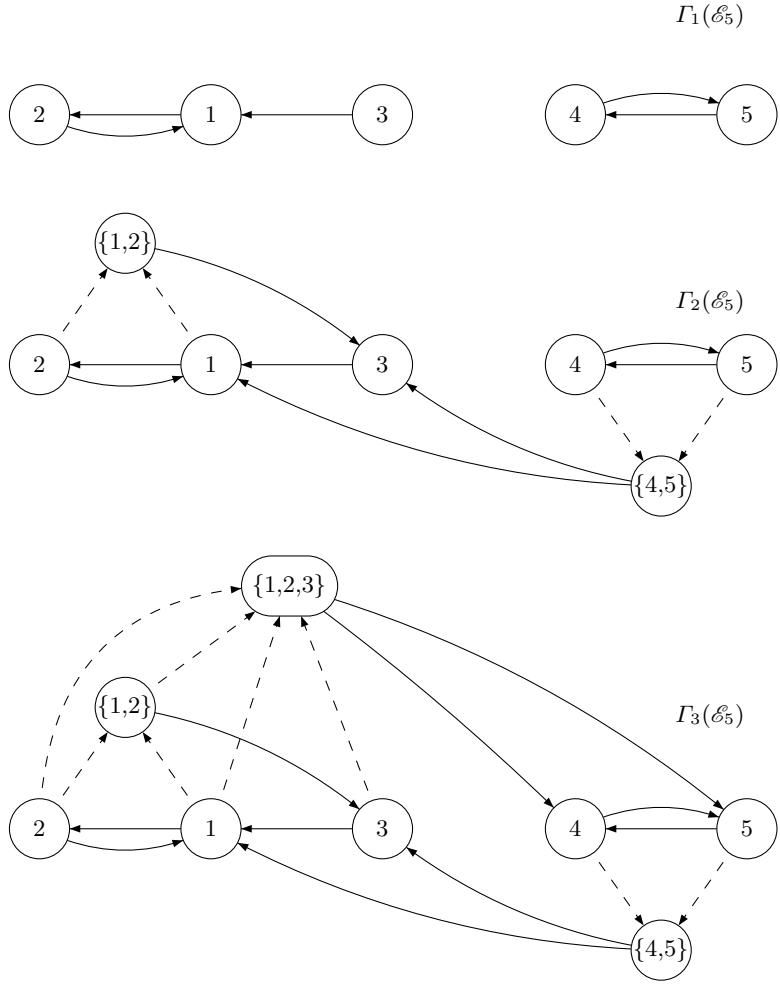
Fig.\,\ref{fig:g123e5} shows the graphs $\Gamma_1(\mathrsfs{E}_5)$ (on the top), $\Gamma_2(\mathrsfs{E}_5)$ (in the middle), and $\Gamma_3(\mathrsfs{E}_5)$ (on the bottom). The inclusion edges are shown with dashed arrows. One sees that $\Gamma_1(\mathrsfs{E}_5)$ has three \scc{}s whose supports are $\{1,2\}$, $\{3\}$, and $\{4,5\}$. The first and the last of these components have rank $\ge 2$. Hence their supports form the set $Q_2$ and occur as vertices of the graph $\Gamma_2(\mathrsfs{E}_5)$. This graph has two \scc{}s whose supports are $\{1,2,3\}$ and $\{4,5\}$. Only the first component has rank $\ge 3$ so it is its support that forms the set $Q_3$ and occurs as a vertex of the graph $\Gamma_3(\mathrsfs{E}_5)$. Since the latter graph is \scn, it yields $\Gamma(\mathrsfs{E}_5)$.
\end{example}

Back to the construction of the graph $\Gamma(\mathrsfs{A})$, observe that if $\mathrsfs{A}$ has $n$ states and the graph $\Gamma_k(\mathrsfs{A})$ is not \scn, then the maximal rank of its \scc{}s does not exceed $n-1$. Hence the number of steps in our process cannot exceed $n-1$. One can show that, for each $n$ and each $k\in\{1,\dots,n-1\}$, there exist automata with $n$ states (both completely reachable and not) such that the process terminates after exactly $k$ steps; we present such examples in Section~5.

\section{The main result}

We start with the promised generalization of Theorem~\ref{thm:sufficient2}.

\begin{theorem}
\label{thm:sufficient3}
If a DFA $\mathrsfs{A}=\langle Q,\Sigma\rangle$ is such that the graph $\Gamma(\mathrsfs{A})$ is \scn\ and $\Gamma(\mathrsfs{A})=\Gamma_k(\mathrsfs{A})$, then $\mathrsfs{A}$ is completely reachable; more precisely, for every non-empty subset $P\subseteq Q$, there is a product $w$ of words of defect at most $k$ such that $P=Q\dt w$.
\end{theorem}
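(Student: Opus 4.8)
The plan is to mimic the proof of Theorem~\ref{thm:sufficient2}: argue by induction on $m:=|Q{\setminus}P|$, the base case $m=0$ being trivial, and in the inductive step produce a subset $R\subseteq Q$ with $|R|>|P|$ and $P=R\dt w$ for some word $w$ of defect at most $k$; applying the induction hypothesis to $R$ gives a product $v$ of words of defect $\le k$ with $R=Q\dt v$, and then $P=Q\dt vw$. Just as in Cases~1 and~2 of Theorem~\ref{thm:sufficient2}, such a $w$ is manufactured from any edge of $E_1\cup\cdots\cup E_k$ of the form $(C,p)$ whose head $p$ lies in $P$ and whose tail $C$ is disjoint from $P$ (here $C$ is a singleton $\{q\}$ for an $E_1$-edge and a member of $Q_2\cup\cdots\cup Q_k$ otherwise, and ``disjoint from $P$'' means $q\notin P$, resp.\ $C\cap P=\varnothing$): the enforcing word $w$ of defect $\le k$ then satisfies $\excl(w)\subseteq C\subseteq Q{\setminus}P$ and $p\in\dupl(w)$, so setting $R:=\{q\in Q\mid q\dt w=p\}\cup\{r'\mid r\in P{\setminus}\{p\}\}$, where $r'\in Q$ is a fixed state with $r'\dt w=r$ (it exists because $\excl(w)\cap P=\varnothing$), yields $P=R\dt w$ and $|R|\ge|P|+1$.

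Hence the entire burden is the following claim: if $\Gamma(\mathrsfs{A})=\Gamma_k(\mathrsfs{A})$ is \scn, then for every proper non-empty $P\subset Q$ there is an edge of the above form. I would prove it by contradiction. Assuming no such edge exists, I claim that for each $j=1,\dots,k$ every \scc\ of $\Gamma_j(\mathrsfs{A})$ has its support either contained in $P$ or disjoint from $P$; this is proved by induction on $j$. For $j=1$ it is immediate: a path inside an \scc\ of $\Gamma_1(\mathrsfs{A})$ from a state outside $P$ to a state of $P$ would contain an $E_1$-edge $(\{q\},p)$ with $q\notin P$, $p\in P$, which is forbidden. For the step, put $\widehat{P}_j:=P\cup\{C\in Q_2\cup\cdots\cup Q_j\mid C\subseteq P\}$; if some \scc\ of $\Gamma_j(\mathrsfs{A})$ had support meeting both $P$ and $Q{\setminus}P$, a path within it from a state outside $P$ to a state of $P$ would cross from the complement of $\widehat{P}_j$ into $\widehat{P}_j$ along a single edge $(u,v)$ with $u\notin\widehat{P}_j$ and $v\in\widehat{P}_j$. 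The definition of $\widehat{P}_j$ together with the definition of the inclusion edges rules out $(u,v)$ being an inclusion edge; since no non-inclusion edge points to a set-vertex, $v\in Q$, hence $v\in P$; therefore $(u,v)\in E_i$ for some $i\le j$, and either $i=1$ and $u\in Q{\setminus}P$, or $i\ge2$ and $u\in Q_i$ with $u\not\subseteq P$ — and in the latter case, since $u$ is by construction the support of an \scc\ of $\Gamma_{i-1}(\mathrsfs{A})$, the induction hypothesis (valid as $i-1<j$) forces $u\cap P=\varnothing$. Either way we obtain a forbidden edge, a contradiction.

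Granting the claim at level $j=k$, choose $p\in P$ and $q\in Q{\setminus}P$, which exist since $P$ is proper and non-empty; the \scc{}s of $\Gamma_k(\mathrsfs{A})$ containing $p$ and $q$ have supports that are respectively contained in and disjoint from $P$, so they are distinct \scc{}s, contradicting the strong connectivity of $\Gamma_k(\mathrsfs{A})=\Gamma(\mathrsfs{A})$. This proves the claim, and with it the theorem.

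The main obstacle I anticipate is precisely the inductive step of the internal claim: one must carefully track how the dichotomy ``support $\subseteq P$ or support disjoint from $P$'' for the \scc{}s propagates from layer $j-1$ to layer $j$, keeping straight the three kinds of edges ($E_i$-edges, and inclusion edges within a layer and between layers), and exploiting the fact that every set-vertex of $Q_i$ is the support of an \scc\ of $\Gamma_{i-1}(\mathrsfs{A})$ — this is exactly what licenses invoking the hypothesis at the correct earlier layer. Getting the index bookkeeping right (which layer's hypothesis applies to which edge) is where a slip is most likely to occur.
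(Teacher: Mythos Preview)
Your proof is correct and follows the same overall strategy as the paper: induct on $|Q\setminus P|$, reduce to finding an $E_j$-edge $(C,p)$ with $p\in P$ and $C\cap P=\varnothing$, then build $R$ from an enforcing word exactly as you describe.

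The one organizational difference worth noting is how the key fact ``$C\cap P=\varnothing$'' is obtained. The paper argues \emph{directly}: it takes the least $j$ for which an $E_j$-edge crosses from $\widehat{Q}\setminus\widehat{P}$ into $\widehat{P}$, and then, assuming $C\cap P\ne\varnothing$, chooses a \emph{shortest} path in $\Gamma_{j-1}(\mathrsfs{A})$ from $C\setminus P$ to $C\cap P$ and analyzes its last two edges to contradict the minimality of $j$. You instead package both minimality arguments into a single induction on the layer index $j$, proving the dichotomy ``support $\subseteq P$ or support $\cap P=\varnothing$'' for all \scc{}s of $\Gamma_j(\mathrsfs{A})$ simultaneously, and then invoke strong connectivity of $\Gamma_k(\mathrsfs{A})$ at the end. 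The two arguments unwind to the same computation---your inductive call at level $i-1$ plays exactly the role of the paper's minimality of $j$---but your formulation is arguably cleaner bookkeeping, while the paper's version is more constructive (it actually locates the edge rather than merely proving its existence by contradiction). The index tracking you flagged as the likely trouble spot is handled correctly: the crucial point that every $C\in Q_i$ is the support of an \scc\ of $\Gamma_{i-1}(\mathrsfs{A})$, so that the hypothesis at level $i-1\le j-1$ applies, is exactly right.
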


\begin{proof}
As in the proof of Theorem~\ref{thm:sufficient2}, we take a non-empty subset $P\subseteq Q$ and proceed by induction on $m:=|Q{\setminus} P|$. The induction base $m=0$ is obvious, and  if $m>0$, then it suffices to find a subset $R\subseteq Q$ such that $P=R\dt w$ for some word $w$ of defect at most~$k$ and $|R|>|P|$. Then the induction assumption applies to $R$, and the same argument as in Theorem~\ref{thm:sufficient2} completes the proof.

Let $P$ be a proper subset of $Q$. If there exists an edge $(q,p)\in E_1$ such that $q\in Q{\setminus} P$ and $p\in P$, we can reuse the reasoning from Case~1 in the proof of Theorem~\ref{thm:sufficient2}. Thus, we may (and will) assume that for every edge $(q,p)\in E_1$, if $q\in Q{\setminus} P$, then also $p\in Q{\setminus} P$. Consider the set
\[
\widehat{P}:=P\cup\left\{C\in\bigcup_{i=2}^kQ_i\mathrel{\rule[-9pt]{.8pt}{24pt}} C\subseteq P\right\},
\]
which is a proper subset of $\widehat{Q}:=Q\cup\bigcup_{i=2}^kQ_i$ since $P$ is a proper subset of $Q$. Since the graph $\Gamma_k(\mathrsfs{A})$ is \scn, it has an edge $e$ that connects $\widehat{Q}{\setminus}\widehat{P}$ with $\widehat{P}$ in the sense that the head of the edge $e$ belongs to $\widehat{Q}{\setminus}\widehat{P}$ while the tail of $e$ lies in $\widehat{P}$. First, we show that $e$ cannot be an inclusion edge. Indeed, if $e$ is an inclusion edges, it is of the form either $(q,C)$, where $C\in\bigcup_{i=2}^kQ_i$ and $q\in C$ or $(D,C)$, where $D\in Q_i$, $C\in Q_j$ with $2\le i<j\le k$, and $D\subseteq C$. If the tail of $e$ lies in $\widehat{P}$, we have $C\subseteq P$ whence $q\in P$ for $e$ of the form $(q,C)$ and $D\subseteq P$ for $e$ of the form $(D,C)$. We see that the head of $e$ belongs to $\widehat{P}$ in either case, a contradiction.

Thus, every edge connecting $\widehat{Q}{\setminus}\widehat{P}$ with $\widehat{P}$ must belong to one of the sets $E_2,\dots,E_k$. Let $j$ be the least number such that $E_j$ contains an edge $(C,p_0)$ with $p_0\in P$ and $C\in\widehat{Q}{\setminus}\widehat{P}$. Then $C$ is the support of an \scc\ $\Delta$ of the graph $\Gamma_{j-1}(\mathrsfs{A})$.

We aim to show that $C\cap P=\varnothing$. Arguing by contradiction, assume that $C\cap P\ne\varnothing$. Since $C\notin\widehat{P}$, we have $C\nsubseteq P$. Thus, $C$ contains both states which are in $P$ and states which are not in $P$. Since $C$ is a part of $\Delta$ and $\Delta$ is \scn, any state in $C$ is reachable in the graph $\Gamma_{j-1}(\mathrsfs{A})$ from any other state in $C$. Take a directed path $\pi$ of minimum length in $\Gamma_{j-1}(\mathrsfs{A})$ such that its starting state lies in $C{\setminus}P$ while its terminal state lies in $C\cap P$. Consider the penultimate vertex of the path $\pi$. Would this vertex be some state $q\in Q$, we could conclude that $q\in P$: recall our assumption that for every $(q,p)\in E_1$, if $q\in Q{\setminus} P$, then also $p\in Q{\setminus} P$. Then removing the last edge from $\pi$ would yield a shorter path, still leading from a state in $C{\setminus}P$ to the state $q\in C\cap P$ (we have $q\in C$ since $q$ is a vertex in $\Delta$ and $C$ is the support of $\Delta$). This would contradict our choice of $\pi$. Thus, the penultimate vertex of $\pi$ is an element $D\in Q_i$ for some $i\le j-1$. All incoming edges for $D$ are inclusion edges, so it easily follows from the minimality of $\pi$ that the second last edge of the path $\pi$ must be of the form $(r,D)$ where $r\in D$. If $D\in\widehat{P}$, that is, $D\subseteq P$, then $r\in P$, and again, removing two last edges from $\pi$ would yield a shorter path, still leading from a state in $C{\setminus}P$ to the state $r\in C\cap P$ (as above, $r\in C$ since $C$ is the support of $\Delta$). We conclude that $D\in\widehat{Q}{\setminus}\widehat{P}$. Now, denoting the terminal state of the path $\pi$ by $p_1$, we see that the last edge of $\pi$, that is, the edge $(D,p_1)\in E_i$ connects $\widehat{Q}{\setminus}\widehat{P}$ with $\widehat{P}$. Since $i<j$, this contradicts our choice of the number $j$.

Thus, $C\cap P=\varnothing$. Now it is easy to complete the proof, following the final argument from Case~2 in the proof of Theorem~\ref{thm:sufficient2}. Indeed, by the definition of $E_j$, there exists a word $w$ of defect~$j$ with respect to $\mathrsfs{A}$ such that $C\supseteq\excl(w)$ and $p_0\in\dupl(w)$. By the definition of $\dupl(w)$, there exist some $q_1,q_2\in Q$ such that $q_1\ne q_2$ and $q_1\dt w=q_2\dt w=p_0$. Since $C\cap P=\varnothing$, we have $\excl(w)\cap P=\varnothing$. Therefore, for every state $r\in P{\setminus}\{p_0\}$, there is a state $r'\in Q$ such that $r'\dt w=r$. Now we set
$R:=\{q\mid q\dt w=p_0\}\cup\bigl\{r'\mid r\in P{\setminus}\{p_0\}\bigr\}$
and conclude that $P=R\dt w$ and $|R|>|P|$ since $q_1,q_2\in R$.\qed
\end{proof}

Theorem~\ref{thm:sufficient3} shows that a DFA $\mathrsfs{A}$ is completely reachable whenever the process of constructing the graph $\Gamma(\mathrsfs{A})$ terminates with SUCCESS. Our next result handles the case of FAILURE.

\begin{theorem}
\label{thm:necessary}
If a DFA $\mathrsfs{A}=\langle Q,\Sigma\rangle$ is such that the graph $\Gamma(\mathrsfs{A})$ is not \scn, then $\mathrsfs{A}$ is not completely reachable; more precisely, if $\Gamma(\mathrsfs{A})=\Gamma_k(\mathrsfs{A})$, then some subset in $Q$ with at least $|Q|-k$ states is not reachable in $\mathrsfs{A}$.
\end{theorem}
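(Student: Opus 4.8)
The plan is to produce, in the \textsc{failure} case, a single explicit subset of $Q$ of size at least $|Q|-k$ that is not reachable. The starting point is the trivial observation that for a subset $C\subseteq Q$ one has $Q\dt w=Q{\setminus}C$ if and only if $\excl(w)=C$; hence $Q{\setminus}C$ is reachable in $\mathrsfs{A}$ precisely when $C$ is the set of excluded states of some word. It therefore suffices to exhibit a non-empty $C\subseteq Q$ with $|C|\le k$ such that $\excl(w)\ne C$ for every $w\in\Sigma^*$; then $Q{\setminus}C$ is unreachable and $|Q{\setminus}C|=|Q|-|C|\ge|Q|-k$.

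\textbf{Choice of $C$.} In the \textsc{failure} case $\Gamma_k(\mathrsfs{A})=\Gamma(\mathrsfs{A})$ is not \scn\ and every \scc\ of it has rank at most $k$. Among all \scc{}s of $\Gamma_k(\mathrsfs{A})$ with non-empty support, pick one, $\Delta$, that is minimal with respect to the reachability order of the condensation, and let $C$ be the support of $\Delta$, so that $1\le|C|\le k$. By minimality we obtain the following property, call it $(\star)$: \emph{the only vertices from $Q$ reachable in $\Gamma_k(\mathrsfs{A})$ from a vertex lying in $C$ are the vertices in $C$ themselves.} Indeed, such a vertex $p$ belongs to a \scc\ that has non-empty support (it contains $p$) and is reachable from $\Delta$; minimality forces this \scc\ to be $\Delta$, whence $p\in C$.

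\textbf{Main argument.} Assume towards a contradiction that some word $w$ satisfies $\excl(w)=C$. Put $r:=|C|$; then $w$ has defect $r$, so $w\in W_r(\mathrsfs{A})$, and $\dupl(w)$ is non-empty since $w$ has positive defect. Fix $p\in\dupl(w)$; as $\dupl(w)\subseteq Q\dt w=Q{\setminus}C$, we have $p\notin C$. If $r=1$, say $C=\{q\}$, then $(q,p)\in E_1$ by \eqref{eq:e1}, so $p$ is reachable from $q\in C$ while $p\notin C$, contradicting $(\star)$. If $r\ge2$, the same conclusion follows \emph{provided} that $C$ itself occurs as a vertex of $\Gamma_k(\mathrsfs{A})$, i.e.\ $C\in Q_r$: then $(C,p)\in E_r$ by \eqref{eq:e4} (take the set ``$C''$'' of that definition to be $C\supseteq\excl(w)$), and, combined with the inclusion edges $(q,C)$ for $q\in C$, this exhibits $p$ as reachable in $\Gamma_k(\mathrsfs{A})$ from a vertex of $C$, again contradicting $(\star)$. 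This finishes the proof except for the claim $C\in Q_r$.

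\textbf{The hard part.} Everything thus reduces to showing $C\in Q_r$, equivalently that the $r$ states of $C$ — which are mutually reachable in $\Gamma_k(\mathrsfs{A})$ because they all lie in $\Delta$ — already form a single \scc\ in $\Gamma_{r-1}(\mathrsfs{A})$. I intend to attack this by considering the least index $\ell$ for which all of $C$ lies in a common \scc\ of $\Gamma_\ell(\mathrsfs{A})$; this common \scc\ has support exactly $C$ (its support is contained in that of $\Delta$, hence in $C$, while it contains all of $C$), so it has rank $r$, and if $\ell\le r-1$ we are done. If $\ell\ge r$, the merging of the states of $C$ at level $\ell$ must be effected by some ``big-set'' vertex $D\in Q_j$ with $j\ge r$ lying on the merging cycle, hence in that common \scc; the crux is then to prove, once more exploiting $(\star)$ and the minimality of $\Delta$ applied to the states underlying $D$, that this underlying state set is forced to equal $C$. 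Since $D\in Q_j$ gives $|D|\ge j\ge r$ while $|D|=|C|=r$ would then hold, this yields $j=r$ and $C=D\in Q_r$. Carrying out this last step — keeping careful track of which \scc{}s the states underlying $D$ sit in, and ruling out interference from \scc{}s of empty support (precisely what the minimal choice of $\Delta$ is meant to neutralize) — is where I expect the principal difficulty to lie.
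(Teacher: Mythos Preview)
Your strategy coincides with the paper's: take a minimal \scc\ $\Delta$ of $\Gamma_k(\mathrsfs{A})$, let its support be the candidate ``excluded set'', and derive a contradiction from the assumption that its complement is reachable. The paper simply writes ``Let $(C,p)$ be an edge enforced by $w$'' and proceeds, whereas you correctly notice that the \emph{existence} of such an edge --- equivalently, the claim that the support itself is a vertex of the graph, $C\in Q_r$ --- needs justification. That is exactly the content of your ``hard part''; the paper passes over it without comment.

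Your sketch for the hard part is headed in the right direction but stops short. Here is a clean way to finish it. First prove by induction on $i$ that every vertex $D'\in Q_i$ lying in $\Delta$ satisfies $D'\subseteq C$ as sets: since $\Delta$ is an \scc, some edge enters $D'$ from a vertex $x\in\Delta$; all incoming edges of $D'$ are inclusion edges, so either $x\in Q$ with $x\in D'$, or $x\in Q_j$ with $j<i$ and $x\subset D'$; either directly or by induction, $x\subseteq C$, whence $D'\cap C\ne\varnothing$, and then $(\star)$ together with the fact that the states of $D'$ are mutually reachable in $\Gamma_{i-1}$ gives $D'\subseteq C$. It follows that $\Delta\cap Q_i=\varnothing$ for $i>r$ and $\Delta\cap Q_r\subseteq\{C\}$. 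If $C\notin Q_r$, then $\Delta\subseteq Q\cup Q_2\cup\cdots\cup Q_{r-1}$, so every edge inside $\Delta$ already belongs to $\Gamma_{r-1}$; thus $\Delta$ is strongly connected in $\Gamma_{r-1}$, and since anything in the same \scc\ of $\Gamma_{r-1}$ as $\Delta$ would also be in the same \scc\ of $\Gamma_k$, the set $\Delta$ is in fact an \scc\ of $\Gamma_{r-1}$. Its rank is $|C|=r$, so $C\in Q_r$ after all. This contradiction closes the gap and your argument then goes through verbatim.

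One minor remark: the paper takes $\Delta$ minimal among \emph{all} \scc{}s rather than among those with non-empty support. Your restriction is harmless (and arguably cleaner, since it sidesteps the question of whether a sink \scc\ could have empty support), and the completion above works equally well with your version of minimality via $(\star)$.
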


\begin{proof}
Assume that $\Gamma(\mathrsfs{A})=\Gamma_k(\mathrsfs{A})$ is not \scn. Then, by the construction of the graph $\Gamma(\mathrsfs{A})$, every \scc\ of $\Gamma(\mathrsfs{A})$ has rank $\le k$. The reachability relation on $\Gamma(\mathrsfs{A})$ induces a partial order on the set of its \scc{}s. Consider an \scc\ $\Delta$ which is minimal with respect to this partial order, and let $D$ be the support of $\Delta$. Let $\ell:=|D|$ and $P:=Q{\setminus}D$. As observed, we must have $\ell\le k$ whence $|P|=|Q|-\ell\ge |Q|-k$.

We aim to show that the subset $P$ is not reachable in  $\mathrsfs{A}$. Indeed, suppose that $P=Q\dt w$ for some word $w$ over the input alphabet of $\mathrsfs{A}$. The defect of $w$ is $\ell\le k$, whence the edges enforced by $w$ occur among the edges of $\Gamma(\mathrsfs{A})$. Let $(C,p)$ be an edge enforced by $w$. By the definition, the set $C$ must be the support of an \scc\ of the graph $\Gamma_\ell(\mathrsfs{A})$; besides that, the requirements  $C\supseteq\excl(w)=D$ and $p\in\dupl(w)\subseteq P$ must hold true. Now take any state $q\in D$. Since $D\subseteq C$, we have $q\in C$, and the inclusion edge $(q,C)\in I_\ell$ also occurs among the edges of $\Gamma(\mathrsfs{A})$ as $\ell\le k$. We see that the edges $(q,C)$ and $(C,p)$ form a directed path from $q$ to $p$ in $\Gamma(\mathrsfs{A})$. Recall that $\Delta$ is a minimal \scc\ whence every vertex of $\Gamma(\mathrsfs{A})$ reachable from a vertex in $\Delta$ must itself lie in $\Delta$. Thus, the state $p$ lies in the \scc\ $\Delta$ and hence belongs to its support $D$. On the other hand, we have $p\in P=Q{\setminus}D$, a contradiction.\qed
\end{proof}

Combining Theorems~\ref{thm:sufficient3} and~\ref{thm:necessary}, we readily  arrive at the following characterization of \cra, which is our main result:

\begin{theorem}
\label{thm:necess&suffic}
If a DFA $\mathrsfs{A}$ is completely reachable if and only if the graph $\Gamma(\mathrsfs{A})$ is \scn.
\end{theorem}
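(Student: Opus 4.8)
The plan is almost entirely bookkeeping, since the hard analytic work has already been done in Theorems~\ref{thm:sufficient3} and~\ref{thm:necessary}. First I would note that the iterative process defining $\Gamma(\mathrsfs{A})$ always terminates: as observed after Example~\ref{examp:illustration}, whenever $\Gamma_k(\mathrsfs{A})$ is not \scn, the maximal rank of its \scc{}s is at most $n-1$ where $n=|Q|$, so the process stops after at most $n-1$ steps, either with SUCCESS (when some $\Gamma_k(\mathrsfs{A})$ is \scn, and then $\Gamma(\mathrsfs{A})=\Gamma_k(\mathrsfs{A})$) or with FAILURE (when some $\Gamma_k(\mathrsfs{A})$ is not \scn\ and all its \scc{}s have rank $<k$, and then again $\Gamma(\mathrsfs{A})=\Gamma_k(\mathrsfs{A})$). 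In both cases $\Gamma(\mathrsfs{A})$ is well-defined and equals $\Gamma_k(\mathrsfs{A})$ for the value of $k$ at which the process halts.

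Next I would prove the two implications separately. For the ``if'' direction: suppose $\Gamma(\mathrsfs{A})$ is \scn. Then the process terminated with SUCCESS, so $\Gamma(\mathrsfs{A})=\Gamma_k(\mathrsfs{A})$ for some $k$, and Theorem~\ref{thm:sufficient3} applies verbatim to give that every non-empty $P\subseteq Q$ is reachable via a product of words of defect at most $k$; in particular $\mathrsfs{A}$ is completely reachable. For the ``only if'' direction, I would argue by contraposition: suppose $\Gamma(\mathrsfs{A})$ is not \scn. Then $\Gamma(\mathrsfs{A})=\Gamma_k(\mathrsfs{A})$ for the value of $k$ at which the process halts with FAILURE, and Theorem~\ref{thm:necessary} produces a subset of $Q$ with at least $|Q|-k$ states (hence non-empty, since $k\le n-1$) that is not reachable in $\mathrsfs{A}$; thus $\mathrsfs{A}$ is not completely reachable. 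Combining the two implications yields the claimed equivalence.

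The only point requiring a small amount of care — and the closest thing here to an obstacle — is the logical scaffolding around the phrase ``$\Gamma(\mathrsfs{A})=\Gamma_k(\mathrsfs{A})$'' that appears as a hypothesis in both Theorems~\ref{thm:sufficient3} and~\ref{thm:necessary}: one must check that the terminating graph of the process is always of this form and that SUCCESS is equivalent to $\Gamma(\mathrsfs{A})$ being \scn\ while FAILURE is equivalent to it not being \scn. This is immediate from the definition of the process (at each stage we either stop because the current graph is \scn, or because it is not \scn\ but no \scc\ has large enough rank to let us extend, or we extend), but it deserves an explicit sentence so that the two cited theorems may be invoked without gap. I would also remark, as the cited theorems already make precise, that the equivalence comes with quantitative information: when $\Gamma(\mathrsfs{A})=\Gamma_k(\mathrsfs{A})$ is \scn, defects up to $k$ suffice to reach every subset, while when it is not \scn\ some $(|Q|-k)$-element subset is unreachable. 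No further calculation is needed.

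Separately I would flag that the statement as typeset (``If a DFA $\mathrsfs{A}$ is completely reachable if and only if $\ldots$'') contains a stray ``If''; the intended statement is simply ``A DFA $\mathrsfs{A}$ is completely reachable if and only if the graph $\Gamma(\mathrsfs{A})$ is \scn.'' The proof above establishes exactly this corrected form.
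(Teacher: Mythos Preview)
Your proposal is correct and mirrors the paper's own proof, which simply states that the result follows by combining Theorems~\ref{thm:sufficient3} and~\ref{thm:necessary}. The extra bookkeeping you supply (termination of the process, identification of SUCCESS/FAILURE with strong connectivity or its failure, and the typo in the statement) is sound and only makes explicit what the paper leaves implicit.
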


\section{Concluding remarks and future work}

\emph{The number of steps in the construction of $\Gamma(\mathrsfs{A})$.} For each $n\ge 2$, it is easy to find a DFA $\mathrsfs{A}$ with $n$ states and 2 input letters (both completely reachable and not) such that $\Gamma(\mathrsfs{A})=\Gamma_1(\mathrsfs{A})$. For \cra, one can use the \v{C}ern\'{y} automata $\mathrsfs{C}_n$, see, e.g., \cite[Example~1]{cra1}; for $\mathrsfs{A}$ being not completely reachable, one can take any permutation automaton with $n$ states.
Here we exhibit two series of DFAs with $n$ states, $\mathrsfs{E}_{n,k}$ and $\mathrsfs{E}'_{n,k}$, where $2\le k< n$, such that the DFAs $\mathrsfs{E}_{n,k}$ are completely reachable, the DFAs $\mathrsfs{E}'_{n,k}$ are not, and for each $\mathrsfs{B}\in\{\mathrsfs{E}_{n,k},\mathrsfs{E}'_{n,k}\}$, the construction of the graph $\Gamma(\mathrsfs{B})$ requires exactly $k$ steps.

The state set of both $\mathrsfs{E}_{n,k}$ and $\mathrsfs{E}'_{n,k}$ is the set $Q=\{1,2\dots,n\}$. The input letters of $\mathrsfs{E}_{n,k}$ are $a_{[1]},\dots,a_{[n]},a_{[1,n-k+1]},\dots,a_{[1,n-1]}$. We are going to define the action of the letters on $Q$; in this definition we denote $n-k+1$ by $\ell$. Now we set for each $i,j\in Q$,
\[
i\dt a_{[j]}:=\begin{cases}i&\text{if $i\ne j\le\ell$ or $i>j>\ell$},\\ i+1&\text{if $i=j<\ell$},\\ 1&\text{if $i=j=\ell$},\\
i-1&\text{if $i=j>\ell$},\\ i\dt a_{[j-1]}&\text{if $i<j$ and $j>\ell$}.\end{cases}
\]
Observe that the last line of the definition uses recursion. Besides that, for $i\in Q$ and $j=\ell,\dots,n-1$, we set $i\dt a_{[1,j]}:=\begin{cases}j+1&\text{if $i\le j$},\\ i&\text{if $i>j$}. \end{cases}$ For an illustration, consider the DFA $\mathrsfs{E}_3$ from Example~\ref{examp:not necessary}; it is easy to see that it belongs to the family $\mathrsfs{E}_{n,k}$, being nothing but its simplest member $\mathrsfs{E}_{3,2}$.

The DFA $\mathrsfs{E}'_{n,k}$ is obtained from $\mathrsfs{E}_{n,k}$ by omitting the letter $a_{[1,n-1]}$. Calculating the graph $\Gamma(\mathrsfs{B})$ for each $\mathrsfs{B}\in\{\mathrsfs{E}_{n,k},\mathrsfs{E}'_{n,k}\}$, one can see that $\Gamma(\mathrsfs{B})=\Gamma_k(\mathrsfs{A})$, and the process stops with SUCCESS for $\mathrsfs{E}_{n,k}$ and with FAILURE for $\mathrsfs{E}'_{n,k}$. By Theorem~\ref{thm:necess&suffic} the DFAs $\mathrsfs{E}_{n,k}$ are completely reachable, while the DFAs $\mathrsfs{E}'_{n,k}$ are not.

We omit the calculations due to page limit; in fact, they are straightforward because $\mathrsfs{E}_{n,k}$ and $\mathrsfs{E}'_{n,k}$ are designed such that every edge in the corresponding graphs is either an inclusion edge or enforced by a letter.

Observe that the size of the input alphabets of the DFAs $\mathrsfs{E}_{n,k}$ and $\mathrsfs{E}'_{n,k}$ is growing with $n$ and $k$. The question of whether or not similar series can be found among DFAs with restricted alphabets still remains open.

We mention in passing that the DFAs $\mathrsfs{E}_{n,k}$ with $k\le n-2$ answer (in the negative) a question asked at the end of \cite[Section~4]{cra1}: $\mathrsfs{E}_{n,k}$ with $k\le n-2$ is a \cran\ that induces no minimal \cran.

\smallskip

\noindent\emph{Converse statements for Theorems~\ref{thm:sufficient3} and~\ref{thm:necessary}.} We do not know if the converse of the ``more precise'' statement in Theorem~\ref{thm:sufficient3} holds true; that is, we do not know whether or not the condition that every non-empty subset is a reachable in $\mathrsfs{A}$ via a product of words of defect at most $k$ implies that the graph $\Gamma_k(\mathrsfs{A})$ is \scn. The question remains open even for $k=1$.

In contrast, it is easy to exhibit examples showing that the converse of the ``more precise'' statement in Theorem~\ref{thm:necessary} fails. In fact, for each $n>2$, there exists a DFA $\mathrsfs{A}$ with $n$ states in which some subset with $n-1$ states is not reachable but constructing the graph $\Gamma(\mathrsfs{A})$ requires $n-1$ steps.

\smallskip

\noindent\emph{Complexity issues.} Here we assume the reader's acquaintance with some basic concepts of computational complexity. Recall that it is still open whether or not complete reachability of a DFA can be recognized in polynomial time, see a detailed discussion in \cite[Section 3]{cra1}. It can be shown that the size of the graph $\Gamma(\mathrsfs{A})$ is polynomial in the size of $\mathrsfs{A}$ so that the result of the present paper directly implies that complete reachability can be decided in polynomial space. (This of course can be deduced from Savitch's Theorem as well.) Once the graph $\Gamma(\mathrsfs{A})$ is constructed, checking its strong connectivity in polynomial time makes no difficulty. However, it is far from being obvious that $\Gamma(\mathrsfs{A})$, even though it has polynomial size, can always be constructed in polynomial time.

Don \cite[Conjecture~18]{Don:2016} has formulated the following conjecture: if a completely reachable DFA $\mathrsfs{A}$ has $n$ states, then every subset of size $k$ can be reached in $\mathrsfs{A}$ via a word of length at most $n(n-k)$. Observe that if this conjecture holds true (even in the following weaker form: in a completely reachable DFA with $n$ states, every non-empty subset can be reached by a word of length $n^c$ for some constant $c$), then Theorem~\ref{thm:necess&suffic} implies that the problem of whether a given DFA $\mathrsfs{A}=\langle Q,\Sigma\rangle$ is completely reachable lies in the complexity class NP. Here is a brief outline of the proof of this claim; the details will be published elsewhere. By Theorem~\ref{thm:necess&suffic} one has to check that the graph $\Gamma(\mathrsfs{A})$ is \scn. Our non-deterministic algorithm guesses: the number $k$ that is expected to ensure the equality $\Gamma(\mathrsfs{A})=\Gamma_k(\mathrsfs{A})$; some subsets of $Q$ that are expected to serve as vertices in the set $Q_2\cup\cdots\cup Q_k$; and some words over $\Sigma$ of polynomial in $|Q|$ length that are expected to enforce enough edges of $\Gamma_k(\mathrsfs{A})$ to witness that the graph is \scn. Then one checks these data for consistency, and if they are consistent, builds in polynomial time the subgraph of $\Gamma_k(\mathrsfs{A})$ spanned by the edges enforced by the guessed words, together with inclusion edges between the guessed subsets. If the subgraph obtained this way is \scn, then so is $\Gamma_k(\mathrsfs{A})$.

Obviously, if $\Gamma(\mathrsfs{A})$ is indeed \scn, then the valid data described above exist, and the above algorithm has a chance to make correct guesses. The bottleneck here is the consistency check, as one has to calculate the sets $\excl(w)$ and $\dupl(w)$ for the guessed words $w$, and this can be done in polynomial in $|Q|$ time only under the assumption that the words have polynomial in $|Q|$ length.

\end{document}